\newcommand{\abs}[1]{\left\vert#1\right\vert}
\newcommand{\paren}[1]{\left(#1\right)}
\newcommand{\bracket}[1]{\left[#1\right]}
\newcommand{\R}{\mathbb{R}}
\numberwithin{equation}{section}
\DeclareMathOperator{\MSE}{MSE}
\DeclareMathOperator{\Bias}{Bias}
\DeclareMathOperator{\hi}{hi}
\DeclareMathOperator{\lo}{lo}
\DeclareMathOperator{\Pois}{Pois}
\DeclareMathOperator{\Var}{Var}
\DeclareMathOperator{\Cov}{Cov}
\newcommand{\tsim}{t_{\rm sim}}
\DeclareMathOperator{\argmin}{argmin}
\newcommand{\E}{\mathbb{E}}
\newtheorem{thm}{Theorem}[section]
\newtheorem{theorem}[thm]{Theorem}
\newtheorem{algorithm}[thm]{Algorithm}
\newtheorem{definition}[thm]{Definition}
\title{Analysis of Estimators for Adaptive Kinetic Monte Carlo}
\date{\today}
\author{D. Aristoff}
\author{S. Chill}
\author{G. Simpson}
\thanks{GS was supported by US Department of Energy Award
  DE-SC0012733.  GS also thanks P. Hitczenko for helpful discussions.}
\begin{document}

\maketitle
\begin{abstract}
  Adaptive Kinetic Monte Carlo combines the simplicity of Kinetic
  Monte Carlo (KMC) with a Molecular Dynamics (MD) based saddle point
  search algorithm in order to simulate metastable systems.  Key to
  making Adaptive KMC effective is a stopping criterion for the saddle point search. In this work, we examine a criterion, recently
  appearing in \cite{Chill:2014es}, that is based on the fraction of total 
  reaction rate found instead of the fraction of observed saddles. The criterion uses the Eyring-Kramers law to estimate the 
  reaction rate at the MD search 
  temperature. We also consider a
  related criterion that remains valid when the Eyring-Kramers law is not. We examine the
  mathematical properties of both estimators and prove their mean
  square errors are well behaved, vanishing as the simulation
  continues to run.
\end{abstract}

\section{Introduction}

An outstanding problem in theoretical materials science and chemistry
is how to reach laboratory time scales of microseconds ($10^{-6}$ s)
and longer 
using Molecular Dynamics (MD) based models which resolve the atomistic
time scale of femtoseconds ($10^{-15}$ s).  Much of this scale
separation is due to the presence of {\it metastable} regions in the
configuration space of the system.  In such regions, often defined by
local minima of an energy landscape, the system stays close to a
particular configuration, such as a local minima, before crossing into
some other metastable region associated with a different
configuration.  Consequently, during much of a direct MD simulation,
the system is close to one metastable region or another. It exhibits dynamics akin to a
continuous time random walk on the set of metastable states, with
comparatively long waiting times.

Since much of the physical significance of these systems is
characterized by the sequence of visited metastable states and the
time spent in each, there have been a variety of efforts to
systematically coarse grain the MD trajectory into a more
computationally efficient continuous time random walk.  A.F. Voter has
proposed three methods, Parallel Replica Dynamics, Hyperdynamics, and
Temperature Accelerated Dynamics, which can overcome metastability
through intelligent usage of the primitive Langevin
dynamics,~\cite{perez2009accelerated,Voter:2002p12678}.  In recent
years, significant effort has been made to understand and quantify the
approximations in these methods and extend their
applicability,~\cite{Aristoff:2014ch,Aristoff:2014tm,Aristoff:2014pr,Binder:2015gu,LeBris:2012et,Lelievre:2013ud,Simpson:2013cs}.

Another approach to the problem is Kinetic Monte Carlo (KMC), and
this will be the focus of this work.  Let us assume our system is
governed by a potential energy $V(x)$, $x\in \R^{d}$ at inverse
temperature $\beta$.  Furthermore, we assume that we have partitioned
configuration space into an at most countable set of metastable
states, $\Omega_i$, associated with local minima $m_i$ of $V$.  The
system can go from metastable state $i$ to metastable state $j$ if
there is a saddle point, $s_{ij}$, of $V(x)$ joining $\Omega_i$ and
$\Omega_j$.  {For conciseness, we will assume there is a single saddle
  point joining two given adjacent metastable states, though, in
  general, there may be multiple pathways.}

In traditional KMC, before a simulation is run, one must identify the
metastable states, their connectivity ({\it i.e.}, which ones are
joined by saddle points), and the reaction rates of each such
connection.  Given all of this information, KMC is very cheap to
simulate.  A single random number is generated and used to select one
of the possible reactions, the system migrates into the new metastable
region, and the algorithm repeats.

Unfortunately, such complete details of the metastable states and
their connectivity are, {\it a priori}, unavailable in all but the
simplest low dimensional systems.  This has motivated the development
of Adaptive Kinetic Monte Carlo (AKMC),
\cite{Chill:2014es,Xu:2008jk,Xu:2009fy}.  In AKMC, the system starts
in some metastable region $\Omega_i$.  Saddle points associated with
$\Omega_i$ are then sought via a {\it saddle point search algorithm}
that successively finds $s_{ij}$.  Reaction rates for each such saddle
can be estimated by the Eyring-Kramers law \cite{Hanggi:1990en}:
\begin{equation}
  \label{e:kramers}
  k_{ij} = g_{ij} \exp\left[-\beta (V(s_{ij}) - V(m_i)\right],
\end{equation}
where, writing $\lambda_1$ for the 
sole negative eigenvalue of $\nabla^2 V(s_{ij})$, 
\begin{equation*}
g_{ij} = \frac{|\lambda_1|}{\pi}\sqrt{\left|\frac{\text{det}\nabla^2 V(m_i)}{\text{det}\nabla^2 V(s_{ij})}\right|}.
\end{equation*}
Once a sufficient number of saddles
associated with $\Omega_i$ have been identified, the problem is
treated by using traditional KMC with the thus far identified
reactions and their rates; this process then repeats in the next
metastable region. Two things are needed to proceed with AKMC:
\begin{enumerate}
\item A saddle point search algorithm;
\item A stopping criterion.
\end{enumerate}
In this work, we will consider the question of the stopping criterion,
provided our saddle point search algorithm satisfies certain
assumptions. Our analysis will focus on estimators similar to the
one introduced by Chill \& Henkelman in~\cite{Chill:2014es}.  We call these {\em
  Chill type estimators}.

In \cite{Chill:2014es}, the authors searched for saddle points out of each
metastable state using high temperature MD.  For concreteness, consider
the Brownian dynamics in ${\mathbb R}^d$:
\begin{equation}
  \label{e:od}
  dX_t = -\nabla V(X_t) dt + \sqrt{2\beta^{-1}}dW_t.
\end{equation} 
The aim is to model the dynamics 
at low temperature $\beta = \beta^{\lo}$. Starting at $X_0 \in \Omega_i$, integrate \eqref{e:od} at a 
higher temperature 
$\beta = \beta^{\hi}$ ({\em i.e.}, 
$\beta^{\lo} > \beta^{\hi}$) until the trajectory leaves $\Omega_i$.  Using
the higher temperature $\beta^{\hi}$ allows an escape to occur more quickly. After the trajectory leaves $\Omega_i$, one of the
saddle points $s_{ij}$ is identified with this pathway using, for
instance, the nudged elastic band 
method~\cite{Henkelman:2000,Henkelman:2000b}, and the low temperature
reaction rate is computed using \eqref{e:kramers} with
$\beta = \beta^{\lo}$. 
This is then repeated, with a new initial condition chosen in $\Omega_i$.  Throughout, the cumulative simulation
time is recorded. 

Other saddle point search algorithms have been proposed, including the
Dimer method and the string method~\cite{Olsen:2004du,E:2007ee}. In
our analysis, the key property that we need to hold true for all of our search methods is the following.  Let
\begin{equation}
  N_{ij}(t) = \text{Number of times saddle $s_{ij}$ has been found by
    time $t$}.
\end{equation}
Then for fixed $i$, during a saddle point search, the $N_{ij}(t)$ are
independent, with respect to $j$, Poisson processes. We prove
below that this holds for a carefully performed saddle point search
via integration of \eqref{e:od}.

This article is organized as follows. We describe the saddle point
search in detail in Section~\ref{s:algorithm} below, and prove some of
its properties, including the above condition on $N_{ij}(t)$, in
Section~\ref{s:properties} below.  In Section~\ref{s:estimators} we
introduce stopping criteria for the saddle point search, and in
Section~\ref{s:analysis} we analyze these criteria.
Section~\ref{s:estimates} contains proofs of some of the estimates in
Section~\ref{s:analysis}.  In Section~\ref{s:discussion} we make some
concluding remarks.

\section{Notation and saddle point search
  algorithm}\label{s:algorithm}

Here and throughout $(X_t)$ is Brownian dynamics, {that is, a
  stochastic process satisfying~\eqref{e:od}.}  For simplicity we fix
a single metastable set $\Omega \equiv \Omega_i$ and suppress the
index $i$ in all of our notations from the Introduction. For our
purposes, $V$ is smooth, and $\Omega$ is an (open) basin of attraction
of $V$ with respect to the gradient dynamics $dy/dt = -\nabla V(y)$.
We assume that $\partial \Omega$ is partitioned into finitely many
disjoint (measurable) subsets, called {\em pathways} and labeled
$1,2,\ldots, N$, such that each pathway $j$ contains a unique saddle
point $s_j$ of~$V$. {When $(X_t)$ leaves $\Omega$, it must exit
  through one of the pathways $1,2,\ldots,N$.}

The algorithm, as well as our analysis, depends heavily on the {quasistationary distribution} (QSD) for $(X_t)$ in $\Omega$, which
we denote by $\nu$.  The QSD $\nu$ is a probability measure that is
locally invariant for $(X_t)$, in the sense that it is invariant
conditionally on the event that $(X_t)$ remains in $\Omega$:
\begin{definition}
  The QSD for $(X_t)$ in $\Omega$ is a probability measure $\nu$
  supported in $\Omega$ such that for all $t > 0$,
  \begin{equation*}
    \nu(\cdot) = {\mathbb P}(X_t \in \cdot\,|\,
    X_0 \sim \nu,\,X_s \in \Omega \hbox{ for all } s \in [0,t]).
  \end{equation*}
\end{definition}
Of course $\nu$ depends on $\Omega$, but for simplicity we do not
indicate this explicitly. {It has been shown~\cite{LeBris:2012et} that
  $\nu$ exists, is unique, and satisfies
  \begin{equation}\label{e:QSD}
    \nu(A) = 
    \lim_{n\to \infty} {\mathbb P}(X_t \in A\,|\,X_s \in A \hbox{ for }s \in [0,t]), \qquad \hbox{ for all } A \subset \Omega.
  \end{equation}
  Moreover this convergence is exponentially fast, uniformly in $A$.
  Equation~\eqref{e:QSD} leads to simple algorithms for sampling
  $\nu$, based on the idea that a sample can be obtained from the
  endpoint of a trajectory of $(X_t)$ that has remained in $\Omega$
  for a sufficiently long time; see~\cite{Binder:2015gu} for details.}

We are now ready to state the high temperature saddle point search
algorithm. {Versions of this algorithm have been used previously; see for
  instance~\cite{Chill:2014es} and references therein. The search runs
  at a user-specified ``high'' (inverse) temperature
  $\beta^{\hi}$. Below we write $\nu$ for the QSD in $\Omega$ at
  temperature $\beta = \beta^{\hi}$.  We also write
  \begin{equation*}
    H(t) = \begin{cases} 0, &t < 0\\1, & t \ge 0\end{cases}
  \end{equation*}
  for the Heaviside unit step function.  }
{\begin{algorithm}\label{a:sps} Set $N_j(t) \equiv 0$ for $t \ge 0$
    and $j = 1,\ldots,N$. Let $M$ be the current cycle of the
    algorithm, and $\tsim$ the simulation clock.  Initialize $M=1$
    and $\tsim = 0$, and iterate the following: \vskip5pt
    \begin{itemize}
    \item[1.] Generate a sample $x_M$ from $\nu$. The simulation clock
      $\tsim$ is stopped during this step.
    \item[2.] Starting at $X_0 = x_M$, evolve $(X_t)$ at
      $\beta = \beta^{\hi}$ until it first leaves $\Omega$, say at
      time $t=\tau^{(M)}$ through pathway $I^{(M)}$. The simulation
      clock $\tsim$ is running during this step, and the stopping
      criterion is continuously checked.  If at some time $\tsim$ 
      the criterion is met, the algorithm stops.
\vskip5pt
    \item[3.] If $I^{(M)} = j$, update $N_j(t) = N_j(t) + H(t-\tsim)$
      for $t \ge 0$ and record the saddle point~$s_j$. Then update
      $M = M+1$. The simulation clock $\tsim$ is stopped during this step.
    \end{itemize}
  \end{algorithm}
  It is not necessary to know $N$, and the pathways can be
  given labels according to the order in which they are found. The
  simulation clock is cumulative, and it only increases in Step 2.  In
  particular, during the $M$-th cycle of the algorithm, $\tsim$
  increases by $\tau^{(M)}$. The stopping criterion will be described
  in Section~\ref{s:estimators}.  Below we write $\tsim$ for the final
  value of the simulation clock in the algorithm, that is, its value
  when the simulation is stopped. To refer to a generic simulation
  clock time we write $t$. Thus, $0 \le t \le \tsim$ and when the
  algorithm stops, $N_j(t)$ is the number of times an exit through
  pathway $j$ has been observed by time $t$. Below we write $N_j(t)$
  for its final value when the algorithm stops. We will also use the
  following notations:
  \begin{equation}\label{e:chi}
    \chi_j(t) = {\mathbbm{1}}_{N_j(t) \ge 1}, 
    \qquad N(t) = \sum_{j=1}^N N_j(t).
  \end{equation}
  That is, $\chi_j(t) = 1$ if an exit through pathway $j$ has been
  observed at least once by time $t$, and is $0$ otherwise; $N(t)$ is
  the total number of exits observed by time $t$.}

\section{Properties of the saddle point search}\label{s:properties}
Our first result follows immediately from properties of the QSD
{established in~\cite{LeBris:2012et}.}
\begin{theorem}\label{t:QSD}
  {Suppose that in Step 1 in the $M$-th cycle of
    Algorithm~\ref{a:sps}, $x_M$ is a random variable with
    distribution $\nu$. Then:}
  \begin{itemize}
  \item[](i)\,\, $\tau^{(M)}$ is exponentially distributed with mean
    $\kappa^{-1}$: ${\mathbb P}(\tau^{(j)} > t) = \exp(-\kappa t)$,
  \item[](ii)\,\, $\tau^{(M)}$ and $I^{(M)}$ are independent.
  \end{itemize}
\end{theorem}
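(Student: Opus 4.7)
The plan is to derive both claims directly from the defining property of the QSD together with the strong Markov property of $(X_t)$. The key identity, which is immediate from the definition of $\nu$, is that under $X_0 \sim \nu$ and conditional on the non-exit event $\{\tau^{(M)} > s\}$, the distribution of $X_s$ is again $\nu$. Once this is in hand, both the memoryless property of $\tau^{(M)}$ and the independence of $\tau^{(M)}$ and $I^{(M)}$ reduce to a short computation.

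For (i), I would set $F(t) = \mathbb{P}(\tau^{(M)} > t)$ with $X_0 = x_M \sim \nu$, and show $F(t+s) = F(s)F(t)$ for all $s,t \ge 0$. Decomposing the event $\{\tau^{(M)} > t+s\} = \{\tau^{(M)} > s\} \cap \{\tau^{(M)} \circ \theta_s > t\}$, where $\theta_s$ is the time-shift, the strong Markov property gives
\begin{equation*}
F(t+s) = \mathbb{E}\!\left[\mathbbm{1}_{\tau^{(M)} > s}\,\mathbb{P}_{X_s}(\tau > t)\right].
\end{equation*}
By the QSD property, the conditional law of $X_s$ given $\tau^{(M)} > s$ is $\nu$, so the expectation factorizes as $\mathbb{P}(\tau^{(M)} > s)\,\mathbb{E}_\nu[\mathbb{P}_{X_0}(\tau > t)] = F(s)F(t)$. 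Since $F$ is right-continuous with $F(0)=1$, we conclude $F(t) = e^{-\kappa t}$ for some $\kappa \in [0,\infty]$; positivity and finiteness of $\kappa$ under our assumptions on $\Omega$ and $V$ follow from the existence results for $\nu$ in~\cite{LeBris:2012et}.

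For (ii), fix $s \ge 0$ and $j \in \{1,\ldots,N\}$, and write $p_j(x) := \mathbb{P}_x(I^{(M)} = j)$ for the pathway-exit probability starting from $x \in \Omega$. Because $I^{(M)}$ is measurable with respect to the post-$s$ trajectory on the event $\{\tau^{(M)} > s\}$, the strong Markov property gives
\begin{equation*}
\mathbb{P}(\tau^{(M)} > s,\, I^{(M)} = j) = \mathbb{E}\!\left[\mathbbm{1}_{\tau^{(M)} > s}\,p_j(X_s)\right].
\end{equation*}
Invoking the QSD identification of the law of $X_s$ given $\{\tau^{(M)} > s\}$ yields
\begin{equation*}
\mathbb{P}(\tau^{(M)} > s,\, I^{(M)} = j) = \mathbb{P}(\tau^{(M)} > s)\int_\Omega p_j\,d\nu = \mathbb{P}(\tau^{(M)} > s)\,\mathbb{P}(I^{(M)} = j),
\end{equation*}
which is exactly the independence statement at the level of distributions.

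The main obstacle, such as it is, is purely notational: one must justify the application of the strong Markov property on the non-exit event and confirm that the QSD identity from~\cite{LeBris:2012et} applies in the form I have used it (conditioning on $\{\tau^{(M)} > s\}$ rather than on some $A \subset \Omega$ as in~\eqref{e:QSD}). Both reduce to standard measure-theoretic bookkeeping for continuous Markov processes with absorbing boundaries, and once they are in place the algebra above closes the argument without any further estimates.
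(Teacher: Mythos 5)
Your argument is correct, and it reconstructs the standard proof of these two facts; note that the paper itself gives no proof here, only the citation to~\cite{LeBris:2012et}, so what you have written is essentially the argument that lies behind that reference. Two small remarks: the simple Markov property at the deterministic time $s$ is all that is needed (the strong Markov property certainly implies it, but invoking it is unnecessary), and it is worth being explicit that the identity ``on $\{\tau^{(M)}>s\}$, $I^{(M)}=I^{(M)}\circ\theta_s$'' is the reason $I^{(M)}$ is measurable with respect to the shifted trajectory, which is what lets you insert $p_j(X_s)$ in the second display.
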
 {Theorem~\ref{t:QSD} then leads to the following.}
\begin{theorem}\label{t:poisson}
  {Suppose that in Step 1 of Algorithm~\ref{a:sps}, $x_1,x_2,\ldots$
    are iid with common distribution $\nu$. Then:}
  \begin{itemize}
  \item[] (i) \,\, $\{N(t)\}_{0 \le t \le \tsim}$ is a Poisson process
    with parameter $\kappa$, \vskip5pt
  \item[] (ii) \,\, $\{N_j(t)\}_{0 \le t \le \tsim}^{j=1,\ldots,N}$,
    are independent Poisson processes with parameters
    \begin{equation}
      \kappa_j := \kappa\,p_j, \qquad p_j := {\mathbb P}(I^{(1)} = j).
    \end{equation} 
  \end{itemize}
\end{theorem}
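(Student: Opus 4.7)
The plan is to derive both parts as direct consequences of Theorem~\ref{t:QSD} together with the iid hypothesis on the initial conditions $x_1,x_2,\ldots$. Since the $M$-th cycle depends only on the initial point $x_M$ and the driving noise in that cycle, and the $x_M$ are iid with common law $\nu$, the pairs $(\tau^{(M)}, I^{(M)})$ are iid in $M$. Within a single cycle, Theorem~\ref{t:QSD}(i) gives $\tau^{(M)} \sim \mathrm{Exp}(\kappa)$, and Theorem~\ref{t:QSD}(ii) gives that $\tau^{(M)}$ is independent of $I^{(M)}$.

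For part (i), I would set $T_M = \tau^{(1)} + \cdots + \tau^{(M)}$ and write $N(t) = \sum_{M \ge 1} \mathbbm{1}_{T_M \le t}$. Because the interarrival times $\tau^{(M)}$ are iid $\mathrm{Exp}(\kappa)$, this is the standard renewal construction of a homogeneous Poisson process of rate $\kappa$. The stated conclusion follows by restricting to $t \in [0,\tsim]$.

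For part (ii), I would invoke the marking (thinning) theorem for Poisson processes: if each arrival of a Poisson process of rate $\kappa$ is independently labeled $j$ with probability $p_j$, then the per-label counting processes are independent Poisson processes with rates $\kappa p_j$. To apply this I need only verify that, conditional on $T_1, T_2, \ldots$, the labels $I^{(1)}, I^{(2)}, \ldots$ are iid with $\mathbb{P}(I^{(M)} = j) = p_j$. This follows from two facts already in hand: the pairs $(\tau^{(M)}, I^{(M)})$ are iid across $M$, so jointly $(T_1, T_2, \ldots)$ and $(I^{(1)}, I^{(2)}, \ldots)$ are independent conditional on the within-cycle structure; and within each cycle $I^{(M)}$ is independent of $\tau^{(M)}$ by Theorem~\ref{t:QSD}(ii). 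Combining these gives the required conditional iid structure of the labels.

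The only real subtlety is that $\tsim$ is a random stopping time determined by the process itself through the stopping criterion, so one should be careful that this does not disturb the Poisson property on $[0,\tsim]$. This is easily handled by constructing the algorithm on an extended probability space where cycles are generated indefinitely from iid $x_M \sim \nu$; then $\{N(t)\}_{t \ge 0}$ and $\{N_j(t)\}_{t \ge 0}$ are honest Poisson processes by the two arguments above, and the processes in the theorem are their restrictions to $[0,\tsim]$. Once this bookkeeping is in place, the result is a direct application of classical Poisson process theory, so the substantive probabilistic content lies entirely in Theorem~\ref{t:QSD} and the iid assumption.
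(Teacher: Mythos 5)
Your proposal is correct and takes essentially the same approach as the paper: both arguments rest on Theorem~\ref{t:QSD}(i) giving iid $\mathrm{Exp}(\kappa)$ interarrival times, Theorem~\ref{t:QSD}(ii) giving labels iid and independent of those times, and the marking/thinning characterization of Poisson processes. The only differences are presentational --- the paper proves the thinning property from scratch via the multinomial computation in \eqref{e:poisson} on an auxiliary marked process and then matches laws with the actual $(N_j(s))$, whereas you cite the marking theorem directly and work with the actual process from the start; you also flag the random-stopping-time subtlety for $\tsim$ more explicitly than the paper, which handles it implicitly by restricting to $[0,\tsim]$.
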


\begin{proof}
  Let $({\tilde N}(s))_{s \ge 0}$ be a Poisson process with parameter
  $\kappa$, which we denote by ${\tilde N}(s)$ for brevity.  Label
  each arrival time of ${\tilde N}(s)$ with a pathway $j$ according to
  the distribution $p_j$, independently of the other arrival times,
  and let ${\tilde N}_j(s)$ be the process with arrivals labeled
  by~$j$. Then for $r,s \ge 0$ and $m_1,\ldots,m_N \ge 0$,
  \begin{align}\begin{split}\label{e:poisson}
      &{\mathbb P}\left(\bigcap_{j=1}^N \left\{{\tilde N}_j(r+s)-{\tilde N}_j(r) = m_j\right\}\right) \\
      &={\mathbb P}\left(N(r+s)-N(r) = \sum_{j=1}^N m_j\right){m_1+\ldots+m_N\choose m_1,\ldots,m_N} \prod_{j=1}^N p_j^{m_j}\\
      &= \prod_{j=1}^N \frac{e^{-\kappa p_js}(\kappa
        p_js)^{m_j}}{m_j!}.\end{split}
  \end{align}
  By summing over all $m_i \ge 0$ for $i \ne j$ in the last expression
  above, we see that for fixed $r,s \ge 0$, the increment
  ${\tilde N}_j(r+s)-{\tilde N}_j(r)$ is Poisson distributed with mean
  $\kappa p_j s$. ${\tilde N}_j(s)$ also inherits independent
  increments from ${\tilde N}(s)$.  This shows that ${\tilde N}_j(s)$
  is a Poisson process with parameter $\kappa_j = \kappa
  p_j$.
  Moreover,~\eqref{e:poisson} shows that ${\tilde N}_j(s)$,
  $j=1,\ldots,N$, are {independent}.

  Let us now relate $({\tilde N}(s))_{s \ge 0}$ with
  $(N(s))_{0 \le s \le \tsim}$. For fixed $s \in [0,\tsim]$, the time
  marginal $N(s)$ is the largest $m$ such that
  $\tau^{(1)}+\ldots+\tau^{(m)} \le s$.  Together with part~{\em (i)}
  of Theorem~\ref{t:QSD}, this shows that on $[0,\tsim]$,
  $(N(s))_{0 \le s \le \tsim}$ and $({\tilde N}(s))_{s \ge 0}$ are
  Poisson processes with the same law.  By part~{\em (ii)} of
  Theorem~\ref{t:QSD}, it follows that the multivariate processes
  $(N_j(s))_{0 \le s \le t_{sim}}^{j=1,\ldots,N}$ and
  $({\tilde N}_j(s))_{0 \le s \le \tsim}^{j=1,\ldots,N}$ have the same
  law. This establishes the result.
\end{proof}

\section{Chill type estimators and stopping
  criteria}\label{s:estimators}

The purpose of the high temperature saddle point search
(Algorithm~\ref{a:sps}) is to locate ``enough'' of the low-temperature
rate corresponding to the metastable set $\Omega$.  More precisely, at
a low temperature corresponding to $\beta = \beta^{\rm lo}$, the first
exit time of $X_t$ from $\Omega$ is approximately exponentially
distributed with mean $(k_1+\ldots+k_N)^{-1}$, where
$k_j = k_j^{\rm lo}$ is given by the Eyring-Kramers
law~\eqref{e:kramers} at $\beta = \beta^{\rm lo}$ (recall the
subscript $i$ has been suppressed). See~\cite{Berglund:2015} and
references therein for rigorous results in this direction. The $k_j$'s
are then exponential rates associated with leaving $\Omega$ through
pathway $j$ at low temperature $\beta^{\rm lo}$.  {The proportion of
  low temperature rate found by time $t$ in Algorithm~\ref{a:sps}~is}
\begin{equation}\label{e:R} {R}(t) := \frac{\sum_{j=1}^N \chi_j(t)
    k_j}{\sum_{j=1}^N k_j}.
\end{equation}
The expected value of $R(t)$ is
\begin{equation}\label{e:Rbar} {\mathbb E}[R(t)] = {\bar R}(t) :=
  \frac{\sum_{j=1}^N p_j(t) k_j}{\sum_{j=1}^N k_j},
\end{equation}
where
\begin{equation}
  \qquad p_j(t):={\mathbb E}[\chi_j(t)] = 1-\exp(-\kappa_j t).
\end{equation}
Here $\kappa_j$ is defined as in Theorem~\ref{t:poisson} at temperature 
$\beta = \beta^{\hi}$.  The idea
behind Chill-type estimators is that when ${R}(t)$ is sufficiently
close to $1$, the high temperature saddle point search can stop. There
are two obstacles to this idea.

The first is that, at any time during Algorithm~\ref{a:sps}, it is
unlikley that all saddle points have been found.  This problem is
remedied by replacing $k_j$ in~\eqref{e:R} with $\chi_j(t) k_j$, which
is computable once pathway $j$ has been found during the
simulation. The second obstacle is that an exact formula for
$p_j(t):= {\mathbb E}[\chi_j(t)]$ will not be known in practice.
Chill-type estimators overcome the latter obstacle by using one of the
following approximations:
\begin{align}\begin{split}\label{e:approxs}
    &{\tilde p}_j(t) :=
    1 - \exp[-k_j^{hi}t], \qquad k_j^{\hi} \hbox{ given by the Eyring-Kramers law at }\beta = \beta^{\hi},\\
    &{\hat p}_j(t) := 1 - \exp[-{\hat N}_j(t)], \qquad {\hat N}_j(t)
    := \begin{cases} N_j(t), & N_j(t) \ge 2, \\ 0, &
      \hbox{else}\end{cases}.\end{split}
\end{align} 
We have used the superscript $^{\hi}$ to indicate that the rate
in~\eqref{e:approxs} is computed at temperature $\beta^{\hi}$ (whereas
$k_j$ is computed at low temperature $\beta^{\lo}$). Also note that
${\tilde p}_j(t)$ is a physical estimate of ${\mathbb E}[\chi_j(t)]$
based on Eyring-Kramers, while ${\hat p}_j(t)$ {is a (biased) Monte
  Carlo estimator.}  From~\eqref{e:approxs} we obtain the following
estimators for ${R}(t)$:
\begin{equation}\label{e:Rt} {\tilde R}(t) := \frac{\sum_{j=1}^N
    {\tilde
      p}_j(t)\chi_j(t)k_j}{\sum_{j=1}^N \chi_j(t)k_j},\qquad {\hat
    R}(t) := \frac{\sum_{j=1}^N {\hat
      p}_j(t)\chi_j(t)k_j}{\sum_{j=1}^N \chi_j(t)k_j}.
\end{equation}
$R(t)$, ${\tilde R}(t)$, and ${\hat R}(t)$ are all random,
while ${\bar R}(t)$ is deterministic.  Both ${\tilde R}(t)$ and ${\hat R}(t)$
are explicitly computable at time 
$t$ during the saddle point search.  {See~\cite{Chill:2014es} for further
  discussion of ${\tilde R}(t)$.  To our knowledge ${\hat R}(t)$ has
  not appeared before in the literature.  We emphasize that
  ${\hat R}(t)$ may be used at any temperature $\beta^{\hi}$, while
  ${\tilde R}(t)$ is limited by 
  the fact that it gives reasonable 
  estimates of $R(t)$ only at
  (relatively low) 
  temperatures where the Eyring-Kramers law holds.} 


After choosing ${\tilde R}(t)$ (resp. ${\hat R}(t)$) as the preferred
estimator, the stopping criterion can now be defined as follows: for a
user-specified parameter $\epsilon > 0$, stop Algorithm~\ref{a:sps} in
Step 3 if and only if
\begin{equation} {\tilde R}(t) > 1-\epsilon \qquad \hbox{(resp. }
  {\hat R}(t) > 1-\epsilon \hbox{)}.
\end{equation}
In Section~\ref{s:analysis} we give rigorous estimates of the bias and
variance of the estimators ${\tilde R}(t)$ and ${\hat R}(t)$. Such
estimates will show that, {as $t$ increases}, when the algorithm
stops, on average at least $(1-\epsilon)\%$ of the low temperature
rate has been found.

\section{Analysis}\label{s:analysis}

The approximation ${\tilde p}_j(t)$ of $p_j(t)$ is usually considered
valid when $\beta^{\rm hi} \ll V(s_j)-V(m)$, with $m$ the minimizer of
$V$ in $\Omega$.  To the authors' knowledge, rigorous results are
scarce except when $s_j = \argmin_{s_1,\ldots,s_N}
V(s_j)-V(m)$; see~\cite{Berglund:2015} 
and references therein.
However, the following is a consequence of results
in~\cite{Aristoff:2014ch}:
\begin{theorem}\label{t:ratios}
  {Suppose $\Omega = (a,b)$ is an interval and $V$ is a Morse
    potential. Then for each $t > 0$,
    \begin{equation}
      \frac{1-{\tilde p}_j(t)}{1-p_j(t)} = 1 + O(1/\beta^{\hi}) \hbox{ as }\beta^{\hi}\to \infty, 
      \qquad j = 1,2.
    \end{equation}}
\end{theorem}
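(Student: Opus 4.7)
The plan is to reduce the theorem to a sharp multiplicative comparison between $\kappa_j$ and $k_j^{\hi}$, and then cite the one-dimensional asymptotic results of \cite{Aristoff:2014ch} to furnish that comparison. From the Poisson structure in Theorem~\ref{t:poisson} one has $p_j(t) = 1 - e^{-\kappa_j t}$, while by definition $\tilde p_j(t) = 1 - e^{-k_j^{\hi} t}$, so
\[
\frac{1-\tilde p_j(t)}{1-p_j(t)} \;=\; \exp\!\bigl((\kappa_j - k_j^{\hi})\,t\bigr).
\]
Since $e^{x} = 1+O(|x|)$ for small $x$, it suffices to establish $(\kappa_j - k_j^{\hi})\,t = O(1/\beta^{\hi})$ for fixed $t$. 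I would obtain this from the stronger multiplicative statement $\kappa_j = k_j^{\hi}\bigl(1+O(1/\beta^{\hi})\bigr)$: because $k_j^{\hi}$ is exponentially small in $\beta^{\hi}$, this already forces $\kappa_j - k_j^{\hi}$ to be much smaller than $1/\beta^{\hi}$ for fixed $t$.

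To establish the multiplicative bound, I would decompose $\kappa_j = \kappa\,p_j$, with $\kappa$ the principal Dirichlet eigenvalue of the generator $(\beta^{\hi})^{-1}\partial_x^2 - V'(x)\partial_x$ on $\Omega=(a,b)$, and $p_j$ the probability of exiting through the corresponding boundary point when started from the QSD $\nu$. The assumption that $\Omega$ is a basin of attraction together with the Morse condition guarantees a unique non-degenerate minimum $m\in(a,b)$ and non-degenerate local maxima at the endpoints $a,b$, which are the two saddles $s_1,s_2$. From the asymptotic analysis in \cite{Aristoff:2014ch} I would then extract the semiclassical expansions
\[
\kappa \;=\; (k_1^{\hi}+k_2^{\hi})\bigl(1+O(1/\beta^{\hi})\bigr), \qquad p_j \;=\; \frac{k_j^{\hi}}{k_1^{\hi}+k_2^{\hi}}\bigl(1+O(1/\beta^{\hi})\bigr),
\]
and the target estimate $\kappa_j = k_j^{\hi}(1+O(1/\beta^{\hi}))$ follows by multiplication; substituting back into the ratio above and Taylor expanding the exponential finishes the proof.

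The main obstacle is the \emph{multiplicative} form of the splitting-probability expansion, rather than a merely additive one. A crude additive bound $p_j = k_j^{\hi}/(k_1^{\hi}+k_2^{\hi}) + O(1/\beta^{\hi})$ would be insufficient, because when the two Eyring-Kramers rates are exponentially separated the leading term for $p_j$ is itself exponentially small and would be drowned out by an additive $O(1/\beta^{\hi})$ error, destroying the multiplicative structure one needs downstream. In one dimension the committor $q_j$ solves the linear ODE $(\beta^{\hi})^{-1} q_j'' - V' q_j' = 0$ and has a closed-form expression via two quadratures involving $e^{\beta^{\hi} V}$; I expect the multiplicative expansion of $p_j = \int q_j\,d\nu$ is obtained by Laplace asymptotics at the non-degenerate critical points of $V$, which is precisely where the Morse hypothesis enters. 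Once that expansion is in hand, the remainder of the argument is short algebraic manipulation.
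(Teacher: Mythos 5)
Your proof is correct and takes essentially the same route as the paper's: both reduce the theorem to the multiplicative estimate $k_j^{\hi}/\kappa_j = 1 + O(1/\beta^{\hi})$, which the paper cites directly from Theorem 4.1 of Aristoff--Leli\`evre \cite{Aristoff:2014ch} and which you rederive via the $\kappa_j = \kappa\,p_j$ decomposition and Laplace asymptotics at the non-degenerate critical points. The paper's proof is just that citation plus ``the result follows,'' so yours is a more explicit version of the same argument, correctly identifying that the \emph{multiplicative} form of the expansion (rather than an additive one) is the essential ingredient, given the exponential smallness of $k_j^{\hi}$ in $\beta^{\hi}$.
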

\begin{proof}
  An examination of the proof of Theorem 4.1 of~\cite{Aristoff:2014ch}
  shows that for $j=1,2$,{
    \begin{equation*}
      k_j^{\rm hi}/\kappa_j = 1 + O(1/\beta^{\hi}) 
      \qquad \hbox{ as } \beta^{\hi} \to \infty,
    \end{equation*}
    where $k_j^{\hi}$ is as in~\eqref{e:approxs}, and
    $\kappa_j$ is as in 
    Theorem~\ref{t:poisson} at temperature $\beta = \beta^{\hi}$.}
  The result follows.
\end{proof}
We next examine the approximation ${\hat p}(t)$ of $p(t)$.
\begin{theorem}\label{t:rate_est}
  Conditionally on $N(t) \ge 1$, ${\hat N}_j(t)$ is an unbiased
  estimator for $\kappa_j t$:
  \begin{equation} {\mathbb E}[{\hat N}_j(t)\,|\, N(t) \ge 1] =
    \kappa_j t.
  \end{equation} 
  Also conditionally on $N(t) \ge 1$, ${\hat p}_j(t)$ is a
  conservative estimate of $p_j(t)$:
  \begin{equation} {\mathbb E}[{\hat p}_j(t)\,|\, {N}_j(t) \ge 1] \le
    p_j(t).
  \end{equation}
\end{theorem}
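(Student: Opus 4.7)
The approach is to reduce both assertions to direct calculations with the Poisson distribution of $N_j(t)$ supplied by Theorem~\ref{t:poisson}, combined with the elementary inequality $e^{-\lambda}\ge 1-\lambda$ for $\lambda\ge 0$.

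For part (i), I would start from $\E[N_j(t)] = \kappa_j t$ and peel off the $k=0$ and $k=1$ terms, since $\hat N_j(t) = N_j(t)\mathbbm{1}_{N_j(t)\ge 2}$ kills both. Only $k=1$ contributes, giving
\[
\E[N_j(t)\mathbbm{1}_{N_j(t)\ge 2}] = \kappa_j t - \P(N_j(t)=1) = \kappa_j t - \kappa_j t\,e^{-\kappa_j t} = \kappa_j t\,(1-e^{-\kappa_j t}).
\]
Since $\{N_j(t)\ge 2\}\subset\{N(t)\ge 1\}$, the indicator of the conditioning event can be dropped from the numerator, and the remaining factor $1-e^{-\kappa_j t}$ matches $\P(N_j(t)\ge 1)$ exactly, so division yields $\kappa_j t$. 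The conceptual point is that discarding the $N_j(t)=1$ observations is precisely what is needed to compensate for the bias introduced by conditioning on having observed pathway $j$ at all.

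For part (ii), I would split $\{N_j(t)\ge 1\}$ into $\{N_j(t)=1\}$ and $\{N_j(t)\ge 2\}$. On the first event $\hat N_j(t)=0$ forces $\hat p_j(t)=0$; on the second I would apply the crude pointwise bound $1-e^{-\hat N_j(t)}\le 1$ to obtain
\[
\E[\hat p_j(t)\,|\,N_j(t)\ge 1] \le \frac{\P(N_j(t)\ge 2)}{\P(N_j(t)\ge 1)}.
\]
It then suffices to establish $\P(N_j(t)\ge 2)\le (1-e^{-\kappa_j t})^2 = p_j(t)^2$. Expanding $\P(N_j(t)\ge 2) = 1-e^{-\kappa_j t}-\kappa_j t\,e^{-\kappa_j t}$ and rearranging reduces this to $e^{-\kappa_j t}\ge 1-\kappa_j t$, which is immediate from convexity of $\lambda\mapsto e^{-\lambda}$.

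The main obstacle is less analytic than notational: aligning the conditioning events with the support of the truncation so that the $1-e^{-\kappa_j t}$ factors cancel cleanly in part (i), and resisting the temptation to replace the bound $1-e^{-\hat N_j(t)}\le 1$ in part (ii) by something sharper. The crude bound is already good enough, since the slack it introduces is exactly absorbed by the $e^{-\lambda}\ge 1-\lambda$ inequality. Once these alignments are in place, both parts are essentially one-line consequences of the Poisson law of $N_j(t)$ from Theorem~\ref{t:poisson}.
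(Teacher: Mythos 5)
Your argument is correct, and part (i) is substantively the paper's own proof: both reduce to evaluating the expectation of the truncated Poisson variable $N_j(t)\mathbbm{1}_{N_j(t)\ge 2}$, and your ``peel off $k=0,1$'' bookkeeping is just a rephrasing of the paper's reindexing of the series $\sum_{n\ge 2} n(\kappa_jt)^ne^{-\kappa_jt}/n!$. One thing to be careful about: the first displayed equation of the theorem writes the conditioning as $N(t)\ge 1$, but the paper's proof (and the second display) condition on $N_j(t)\ge 1$, and only the latter gives a clean $\kappa_j t$ --- conditioning on $N(t)\ge 1$ would put $\P(N(t)\ge 1)=1-e^{-\kappa t}$ in the denominator and the identity would fail whenever $N>1$. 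Your exposition inherits this ambiguity (you invoke $\{N_j(t)\ge 2\}\subset\{N(t)\ge 1\}$ but then divide by $\P(N_j(t)\ge 1)$); the correction is simply to read the first conditioning event as $N_j(t)\ge 1$, consistent with the proof and with the second equation.

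For part (ii) you take a genuinely different route. The paper's proof is a one-liner from part (i): since $x\mapsto 1-e^{-x}$ is concave and $\hat p_j(t)=1-e^{-\hat N_j(t)}$, Jensen gives
\[
\E\bigl[\hat p_j(t)\,\big|\,N_j(t)\ge 1\bigr]\le 1-e^{-\E[\hat N_j(t)\,|\,N_j(t)\ge 1]}=1-e^{-\kappa_j t}=p_j(t).
\]
You instead discard all information in $\hat p_j$ beyond its support, bounding $\hat p_j(t)\le\mathbbm{1}_{N_j(t)\ge 2}$, and then prove the Poisson tail inequality $\P(N_j(t)\ge 2)\le p_j(t)^2$, which after rearrangement is exactly $e^{-\lambda}\ge 1-\lambda$. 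Both arguments are valid and short. The paper's route is tighter conceptually (the unbiasedness from part (i) does all the work and the concavity is the whole story), whereas yours is interesting precisely because the crudest imaginable bound on $\hat p_j$ still suffices --- it shows that the conservativeness of $\hat p_j(t)$ is really a statement about the small-count Poisson tail, not about the shape of $x\mapsto 1-e^{-x}$. A reader who wants to understand \emph{why} the estimator is conservative is probably better served by the Jensen argument, but your version is a nice sanity check and is arguably more elementary.
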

\begin{proof}
  Recall that $N_j(t)$ is a Poisson process with parameter $\kappa_j$.
  {Thus,
    \begin{align*}
      {\mathbb E}[{\hat N}_j(t)\,|\, N_j(t) \ge 1] &= 
                                                     \left(1-e^{-\kappa_j t}\right)^{-1}\sum_{n=2}^\infty n \frac{(\kappa_j t)^n e^{-\kappa_j t}}{n!} \\
                                                   &= \frac{\kappa_j t}{1-e^{-\kappa_j t}}\sum_{n=1}^\infty \frac{(\kappa_j t)^{n} e^{-\kappa_j t}}{n!} = \kappa_j t.
    \end{align*}}
  Since $x \mapsto 1-e^{-x}$ is a concave function, the second
  statement of the theorem follows from Jensen's inequality.
\end{proof}
The reason that we consider conditional expectations in
Theorem~\ref{t:rate_est} is that Algorithm~\ref{a:sps} cannot stop
before $N(t) \ge 1$. Thus, we want estimates conditioned on
that event. We call ${\hat p}_j(t)$ a conservative estimate for
$p_j(t)$ because it is a {lower bound on average, so that using
  ${\hat p}_j(t)$ in place of $p_j(t)$ leads to a larger average
  stopping time for Algorithm~\ref{a:sps}.}

Before proceeding we define, for a real-valued random variables $X$
and~$Y$,
\begin{equation}\label{e:bias_mse}
  \Bias(X,Y) := {\mathbb E}[X-Y], 
  \qquad \MSE(X,Y) := \Bias(X,Y)^2 
  + \Var(X).
\end{equation}
Observe that the mean square error is not symmetric in its arguments.
\begin{theorem}\label{t:error_bounds}
  Write $q_j(t) = 1-p_j(t) = \exp[-\kappa_j t]$ and
  $K = k_1 + \ldots + k_N$.  For the estimator ${\tilde R}(t)$,
  \begin{align}\begin{split}
      &\abs{\Bias(\tilde{R}(t),{R}(t))} \le N \max_j
      \abs{\Bias({\tilde p}_j(t) ,
        p_j(t))} + \frac{K}{\min_j k_j}{\bar R}(t) \max_j q_j(t),\\
      &\Var(\tilde {R}(t)) \leq  {4\frac{K^2}{\min_j k_j^2}}{\bar R}(t)^2 \max_{j} q_j(t),\\
      &\MSE(\tilde {R}(t), { R}(t)) \leq 2N^2 \max_j \MSE(\tilde
      p_j(t),
      p_j(t)) \\
      &\qquad\qquad\qquad\qquad\qquad+ {\frac{K^2}{\min_{j}
          k_j^2}}\paren{2\max_{j} q_j(t) + 4 }{\bar R}(t)^2 \max_{j}
      q_j(t).
    \end{split}
  \end{align}
  For the estimator ${\hat R}(t)$,
  \begin{align}\begin{split}
      &\abs{\Bias(\hat{R}(t), {R}(t))}\leq N \max_j
      \abs{\Bias(\hat{p}_j(t), p_j(t))} +
      \frac{K}{\min_j k_j}{\bar R}(t)\max_j q_j(t) ,\\
      &\Var(\hat{R}(t))\leq \frac{2K^2}{\min_j k_j^2}{\bar R}(t)^2
      \max_j q_j(t)
      + \paren{1 + 2N^2 \max_j q_j(t)}\max_j\Var(\hat{p}_j(t)),\\
      &\MSE(\hat{R}(t),{R}(t)) \le \paren{1 + N^2 + 2N^2 \max_j q_j(t)}\max_j\MSE(\hat{p}_j(t),p_j(t))\\
      &\qquad\qquad\qquad\qquad\qquad+ \frac{4K^2}{\min_j k_j^2}{\bar
        R}(t)^2\paren{1 + \max_j q_j(t)}\max_j q_j(t).
    \end{split}
  \end{align}
  Here, all maxima and minima are taken over $j \in \{1,\ldots,N\}$.
\end{theorem}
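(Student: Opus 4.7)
The overall strategy is to decompose each estimator's error into a deterministic approximation error (from $\tilde p_j$ or $\hat p_j$ versus $p_j$) and a stochastic error involving the indicators $\chi_j(t)$ and the random denominator $S(t) := \sum_j \chi_j(t) k_j$. I adopt the convention $\tilde R(t) = \hat R(t) := 0$ on $\{S(t)=0\}$, on which $R(t)=0$ as well, so that $\tilde R - R$ and $\hat R - R$ vanish there. On $\{S(t)>0\}$, adding and subtracting the intermediate quantity $R^*(t) := \sum_j p_j(t)\chi_j(t)k_j / S(t)$ yields
\begin{equation*}
\tilde R(t) - R(t) = \frac{\sum_j (\tilde p_j(t) - p_j(t))\chi_j(t) k_j}{S(t)} - \frac{\sum_j q_j(t)\chi_j(t)k_j}{S(t)} + \frac{\sum_j (1-\chi_j(t))k_j}{K},
\end{equation*}
in which the first term captures the deterministic approximation error and the last two capture the stochastic discovery error; an analogous identity holds for $\hat R(t) - R(t)$ with $\hat p_j$ in place of $\tilde p_j$.

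For the bias, take expectations of this identity. The first term is bounded by $\sum_j |\tilde p_j - p_j|\,\E[\chi_j(t) k_j \mathbbm{1}_{\{S(t)>0\}}/S(t)] \le \sum_j |\tilde p_j - p_j| \le N\max_j |\Bias(\tilde p_j,p_j)|$, using the pointwise estimate $\chi_j k_j \le S$. For the last two terms, use $1/S(t) \le 1/\min_j k_j$ on $\{S(t)>0\}$ together with $\E[\chi_j(t)] = p_j(t)$ and $\E[1-\chi_j(t)] = q_j(t)$, and bound $\sum_j q_j p_j k_j \le K\bar R(t)\max_j q_j(t)$; combining them produces the $(K/\min_j k_j)\bar R(t)\max_j q_j(t)$ term. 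The bias bound for $\hat R(t)$ follows the same steps with $\hat p_j$ in place of $\tilde p_j$.

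For the variance, use $\Var(\tilde R(t)) \le \E[(\tilde R(t) - \bar R(t))^2] \le 2\,\E[(\tilde R - R)^2] + 2\,\Var(R(t))$. The term $\Var(R(t)) = K^{-2}\sum_j p_j q_j k_j^2$ is computed exactly from the independence of the $\chi_j(t)$'s, which follows from the independence of the Poisson processes $(N_j(t))$ proved in Theorem~\ref{t:poisson}. For $\E[(\tilde R - R)^2]$, square the three-term decomposition above, apply $(\sum a_i)^2 \le 3\sum a_i^2$, and evaluate the resulting second moments (such as $\E[(\sum_j q_j\chi_j k_j)^2]$) using the same independence. The MSE bound then follows from $\MSE = \Bias^2 + \Var$ combined with $(a+b)^2 \le 2a^2+2b^2$, producing the $2N^2 \max_j\MSE(\tilde p_j,p_j)$ contribution from the bias squared and the $(K/\min_j k_j)^2 \bar R^2 \max_j q_j(2\max_j q_j + 4)$ contribution from the variance.

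The analysis of $\hat R(t)$ introduces the extra randomness of $\hat p_j(t)$. In the analogous decomposition I would further split the deterministic-approximation term by subtracting and adding $\E[\hat p_j(t)]$, isolating a piece controlled by $\max_j \Var(\hat p_j(t))$; Theorem~\ref{t:poisson} again provides the independence of $\hat p_j(t)$ across $j$ needed to kill cross terms in the second moments. The main obstacle throughout is the random denominator $S(t)$: a naive moment expansion would demand control of $\E[1/S(t)]$ or $\E[1/S(t)^2]$, which are not easily estimated. I sidestep this by using the pointwise bound $1/S(t) \le 1/\min_j k_j$ on $\{S(t)>0\}$ together with the vanishing of all errors on $\{S(t)=0\}$; this produces the large prefactor $K/\min_j k_j$ in the final bounds. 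These prefactors are acceptable because $q_j(t) = e^{-\kappa_j t}$ and $\max_j \Var(\hat p_j(t))$ both vanish as $t \to \infty$, so the MSE goes to zero.
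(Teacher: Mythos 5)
Your algebraic decomposition of $\tilde R - R$ is correct, and the bound on the first (deterministic-approximation) term using $\chi_j k_j \le S$ is exactly what the paper does. However, there are genuine gaps in the stochastic part of the bias bound and, more seriously, in the variance bound.

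For the bias, your final two terms are $A := \frac{\sum_j(1-\chi_j)k_j}{K}$ and $B := \frac{\sum_j q_j\chi_j k_j}{S}$. You control $\E[B\,\mathbbm{1}_{S>0}]$ by $\tfrac{K}{\min_j k_j}\bar R\max_j q_j$ using $1/S\le 1/\min_j k_j$, but you never show $\E[A\,\mathbbm{1}_{S>0}]$ is bounded by that same quantity. The crude bound $\E[A\,\mathbbm{1}_{S>0}]\le\max_j q_j$ is not $\le \tfrac{K}{\min_j k_j}\bar R\max_j q_j$ when $K\bar R<\min_j k_j$, i.e.\ for small $t$. This is fixable with a sharper estimate using $\E[A\,\mathbbm{1}_{S>0}]=\tfrac{\sum_j q_j k_j}{K}-\prod_j q_j$ and $\prod_j q_j\ge 1-\sum_j p_j$, but your proposal as written does not close the argument.

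The variance gap is structural. Writing $\Var(\tilde R)\le 2\,\E[(\tilde R-R)^2]+2\,\Var(R)$ introduces the term $\Var(R)=K^{-2}\sum_j k_j^2 p_j q_j$, which for small $t$ scales like $t$ (since $p_j\sim\kappa_j t$, $q_j\sim 1$), whereas the target bound $\tfrac{4K^2}{\min_j k_j^2}\bar R(t)^2\max_j q_j(t)$ scales like $t^2$ (since $\bar R=O(t)$). Because $\Var(R)\ge 0$ is added, no amount of work on the $\E[(\tilde R-R)^2]$ term can make your bound tighter than $2\Var(R)$, so this route cannot reproduce the stated theorem. The same objection applies to the $\hat R$ estimator, which inherits the same denominator issue.

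The paper sidesteps both issues by a different representation. Writing $\xi_i=k_i+\sum_{m\ne i}\chi_m k_m$, one has $\xi_i\ge k_i>0$ always and $\chi_i k_i/S=\chi_i k_i/\xi_i$ identically, so
\begin{equation*}
\tilde R(t)=\sum_{i=1}^N\frac{\tilde p_i(t)\chi_i(t)k_i}{\xi_i(t)}
\end{equation*}
with no conditioning on $\{S>0\}$ needed. Crucially, Theorem~\ref{t:poisson} makes $\chi_i$ and $\xi_i$ \emph{independent} (since $\xi_i$ depends only on $\chi_m$, $m\ne i$). The paper then obtains two-sided moment bounds on $\E[\xi_i^{-1}]$ and $\E[\xi_i^{-2}]$ via Jensen's inequality (lower) and the Edmunson--Madansky inequality (upper), each of which produces terms proportional to $\sum_{m\ne i}q_m k_m$ rather than the looser $1/\min_j k_j$ factor you use. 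The variance is then computed directly as $\Var(\tilde R)=\sum_{i,j}k_ik_j\tilde p_i\tilde p_j\Cov(\chi_i/\xi_i,\chi_j/\xi_j)$, where the factors $\tilde p_i\tilde p_j$ (each $O(t)$ for small $t$) supply the $\bar R(t)^2$ decay that your intermediate $\Var(R)$ destroys. I recommend you adopt the $\xi_i$ representation and the independence of $\chi_i$ and $\xi_i$; without them the stated rates in the theorem are not attainable.
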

\begin{proof} We give proofs in Section~\ref{s:estimates} below.
\end{proof}


We note that some of the bounds in Theorem~\ref{t:error_bounds} have
been loosened so that simpler expressions are obtained. This will
become clear in the derivation of the bounds in
Section~\ref{s:estimates} below. We highlight that the bias is bounded
by the bias of the estimate of $p_j(t)$, together with another term
representing an ``inherent'' bias associated with ${\bar R}(t)$. This
second term may be approximated by noting that $|{\bar R}(t)|< 1$ for
all $t$ and, due to Theorem~\ref{t:ratios}, we expect $q_j(t)$ can be
estimated by the known function ${\tilde p}_j(t)$ or $\hat{p}_j(t)$.

\section{Estimates}\label{s:estimates}

In this section we give a proof of Theorem~\ref{t:error_bounds}.
Recall that $q_j(t) := 1- p_j(t)$ and $K := \sum_{j=1}^N k_j$ is the
total reaction rate.  For brevity, we will sometimes suppress the $t$
dependence in our expressions. Also, all sums are over $1,\ldots,N$
unless otherwise indicated.

\subsection{Preliminary Calculations}
Observe that
\begin{equation*}
  \Bias({\tilde R}(t),R(t)) = 
  \Bias({\tilde R}(t),{\bar R}(t)), 
  \qquad \MSE({\tilde R}(t),R(t)) = 
  \MSE({\tilde R}(t),{\bar R}(t))
\end{equation*}
and similarly for ${\hat R}(t)$; this fact will be used below without
comment.  There are a few expressions that will show up repeatedly in
the analyses of both ${\tilde R}$ and ${\hat R}$.  We analyze them
here for simplicity.  Let
\begin{equation}
  \label{e:chit}
  \xi_i = k_i + \sum_{m\neq i} k_m\chi_m 
\end{equation}
We make the following calculations:
\begin{subequations}
  \begin{gather}
    k_i \leq \xi_i \leq K\\
    \E[\xi_i] = k_i + \sum_{m\neq i} p_m k_m = K - \sum_{m\neq i}
    q_mk_m
  \end{gather}
\end{subequations}
A lower bound on this can be obtained from Jensen's inequality,
\begin{equation}
  \label{e:Exi1_lower}
  \E[\xi_i^{-1}]\geq \E[\xi_i]^{-1} = \frac{1}{K - \sum_{m\neq i}
    q_mk_m}\geq \frac{1}{K} +\frac{1}{K^2} \sum_{m \neq i} k_m
  q_m
\end{equation}
while an upper bound can be obtained from the Edmunson-Madansky
inequality,
\begin{equation}
  \label{e:Exi1_upper}
  \E[\xi_i^{-1}]\leq \frac{1}{k_i}\frac{K -
    \E[\xi_i]}{K- k_i} + \frac{1}{K}\frac{
    \E[\xi_i]-k_i}{K- k_i}=\frac{1}{K} + \frac{1}{k_i K}\sum_{m \neq
    i} k_m q_m
\end{equation}
In the same way,
\begin{equation}
  \label{e:Exi2_lower}
  \E[\xi_i^{-2}]\geq \E[\xi_i]^{-2} = \frac{1}{\paren{K - \sum_{m\neq i}
      q_mk_m}^2}\geq \frac{1}{K^2} + \frac{2}{K^3}\sum_{m \neq i}
  q_mk_m
\end{equation}
and
\begin{equation}
  \label{e:Exi2_upper}
  \E[\xi_i^{-2}]\leq \frac{1}{k_i^2}\frac{K - \E[\xi_i]}{K-
    k_i} + \frac{1}{K^2}\frac{
    \E[\xi_i]-k_i}{K- k_i}= \frac{1}{K^2} + \frac{K+k_i}{k_i^2 K^2} \sum_{m\neq i}
  q_m k_m
\end{equation}
Therefore,
\begin{equation}
  \label{e:varxi}
  \Var(\xi_i^{-1}) \leq \paren{ \frac{K+k_i}{k_i^2 K^2} -
    \frac{2}{K^3}}\sum_{m\neq i} q_m k_m \leq \frac{2}{K k_i^2}\sum_{m\neq i} q_m(t) k_m,
\end{equation}
where we have lost some of the estimate in the last inequality for the
sake of conciseness.

\subsection{Estimates for ${\tilde R}$}
Below it is useful to notice that
\begin{equation} {\tilde R}(t) = \sum_{i=1}^N\frac{{\tilde
      p}_i(t)\chi_i(t) k_i}{k_i + \sum_{m \ne i} \chi_m(t)k_m} =
  \sum_i \frac{{\tilde p}_i \chi_i k_i}{\xi_i}.
\end{equation}

\subsubsection{Bias}
We begin with the direct calculation
\begin{align*}
  \E[{\tilde R} - {\bar R}] &= \sum_{i=1}^N \E\bracket{\frac{\chi_i \tilde p_i
                              k_i}{\xi_i} - \frac{\chi_i k_i}{K}} \\
                            &= \sum_{i=1}^{N} (\tilde p_i -
                              p_i) \E\bracket{\frac{\chi_i k_i}{\xi_i}}+ \sum_{i=1}^N \E\bracket{\frac{\chi_i
                              p_i k_i}{\xi_i} - \frac{\chi_i k_i}{K}}\\
                            &= \sum_{i=1}^{N} (\tilde p_i -
                              p_i) \E\bracket{\frac{\chi_i k_i}{\xi_i}} 
                              + \sum_{i=1}^N \underbrace{\E\bracket{\frac{K p_i}{\xi_i} -
                              1}}_{\equiv b_i}\frac{p_i k_i}{K}.
\end{align*}
Using \eqref{e:Exi1_lower} and \eqref{e:Exi1_upper},
\begin{equation*}
  \frac{1}{ K}\sum_{m \neq
    i} k_m q_m - q_i\leq b_i \leq \frac{1}{k_i}\sum_{m \neq
    i} k_m q_m - q_i.
\end{equation*}
Thus,
\begin{equation*}
  \left|\sum_{i=1}^N b_i\frac{p_i k_i}{K}\right| 
  \le \sum_{i=1}^N \paren{\sum_{j=1}^N
    \frac{k_j}{k_i} q_j  }\frac{p_i(t) k_i}{K}\\\
  \leq
  \frac{K \max_{j} q_j(t) }{\min_{j} k_j}\, {\bar R}(t).
\end{equation*}
Combining the above expressions gives
\begin{equation}
  \label{e:bias2_est}
  \abs{\Bias({\tilde R}(t),{\bar R}(t))}
  \leq N \max_i \abs{\tilde p_i(t) -
    p_i(t)} + \frac{K\max_i q_i(t)}{\min_i k_i}{\bar R}(t) .
\end{equation}

\subsubsection{Variance}
For the variance, we first write
\begin{equation}
  {\tilde R} -\E[{\tilde R}] =
  \sum_{i=1}^N \paren{\frac{\chi_i}{\xi_i} -
    \E\bracket{\frac{\chi_i}{\xi_i}}}{\tilde p}_i k_i. 
\end{equation}
Hence,
\begin{equation}
  \Var({\tilde R}(t)) = \sum_{i,j=1}^N k_i k_j {\tilde p}_i {\tilde p}_j
  \underbrace{\Cov\paren{\frac{\chi_i}{\xi_i} ,
      \frac{\chi_j}{\xi_j} }}_{\equiv v_{ij}}.
\end{equation}
Since $v_{ij}\leq \sqrt{v_{ii}}\sqrt{v_{jj}}$, it will be sufficient
for us to analyze the diagonal terms.  By Theorem~\ref{t:poisson},
$\chi_i$ and $\xi_i$ are independent. Thus
\begin{equation}
  v_{ii} = \E[\chi_i]^2 \Var(\xi_i^{-1}) + \E[\xi_i^{-1}]^2 \Var(\chi_i)
  +  \Var(\xi_i^{-1}) \Var(\chi_i).
\end{equation}
Using \eqref{e:Exi2_upper} and \eqref{e:varxi},
\begin{equation}
  \begin{split}
    v_{ii} &\leq p_i\Var(\xi_i^{-1})  + p_iq_i \E[\xi_i^{-2}]\\
    &\leq p_i \paren{ \frac{K+k_i}{k_i^2 K^2} -
      \frac{2}{K^3}}\sum_{m\neq i} q_m k_m + p_i
    q_i\paren{\frac{1}{K^2} + \frac{K+k_i}{k_i^2 K^2}
      \sum_{m\neq i} q_m k_m}\\
    &\leq \frac{p_i q_i}{K^2} +\frac{4 p_i}{k_i^2 K}\sum_{m\neq i}
    q_mk_m\leq \frac{4 p_i}{k_i^2}\max_{j} q_j\leq \frac{4
    }{k_i^2}\max_{j}
    q_j(t)\\
    &\leq \frac{4}{\min_j k_j^2}\max_{j} q_j(t).
  \end{split}
\end{equation}
We have made some sacrifices in the last inequalities in order to
obtain a more concise expression.  Consequently,
\begin{equation}\label{e:var2_est}
  \begin{split}
    \Var({\tilde R}(t)) &\leq \sum_{i,j=1}^N k_i k_j {\tilde p}_i(t)
    {\tilde p}_j(t)\sqrt{v_{ii}}\sqrt{v_{jj}}\\
    & \leq {\frac{4K^2}{\min_i k_i^2}}{\bar R}(t)^2 \max_{i} q_i(t).
  \end{split}
\end{equation}

\subsubsection{MSE}
Combining \eqref{e:bias2_est} and \eqref{e:var2_est}, we then obtain
\begin{equation}
  \label{e:mse2_est}
  \begin{split}
    \MSE({\tilde R}(t), {\bar R}(t)) &\leq 2N^2 \max_i \abs{\tilde
      p_i(t) -
      p_i(t)}^2\\
    & + {\frac{K^2}{\min_{i} k_i^2}}\paren{2\max_{i} q_i(t) + 4 }{\bar
      R}(t)^2 \max_{i} q_i(t).
  \end{split}
\end{equation}
In this calculation, we see that the mean square error may ultimately
be dominated by how well the $\tilde p_i$ approximate the $p_i$.

\subsection{Estimates for ${\hat R}$}

We begin by noting that, since $\hat{p}_j(t)=0$ if $\chi_j(t) \ne 1$,
\begin{equation}
  \label{e:chill3_v2}
  {\hat R}(t) = \sum_{j} \frac{\hat p_j(t) k_j}{k_j + \sum_{m\neq j} \chi_m(t) k_j}.
\end{equation}

\subsubsection{Bias}
We begin by writing
\begin{equation} {\hat R} - {\bar R} = \sum_{i=1}^N (\hat{p}_i - p_i)
  \frac{k_i}{\xi_i} + \sum_{i=1}^N\frac{k_i p_i}{\xi_i} - \frac{k_i
    p_i}{K}
\end{equation}
so that, after taking an expectation,
\begin{equation}
  \E[{\hat R} - {\bar R}] = \sum_{i=1} \E\bracket{ (\hat{p}_i - p_i)
    \frac{k_i}{\xi_i} } + \sum_{i=1}^N \paren{\E\bracket{\frac{K}{\xi_i}}-1}\frac{k_i p_i}{K}.
\end{equation}
Hence,
\begin{equation}
  \label{e:bias3}
  \abs{\Bias({\hat R}(t), {\bar R}(t))}\leq N \max_i \abs{\Bias(\hat{p}_i(t), p(t))} +
  \frac{K}{\min_i k_i}{\bar R}(t)\max_i q_i(t),
\end{equation}
and we see that the observed bias is controlled by the biases of the
approximate probabilities, $\hat{p}_i$, and the inherent bias of the
Chill type estimators.

\subsubsection{Variance}
For the variance, we have
\begin{equation}
  \Var({\hat R}) = \sum_{i,j=1}^N k_i k_j
  \underbrace{\Cov \paren{\frac{\hat{p}_i}{\xi_i},
      \frac{\hat{p}_j}{\xi_j}}}_{\equiv \hat{v}_{ij}}.
\end{equation}
As before, we only need to study the diagonal entries, and use
Theorem~\ref{t:poisson} to obtain
\begin{equation}
  \begin{split}
    \hat{v}_{ii} &= \E[\hat{p}_i]^2\Var(\xi_i^{-1}) + \E[\xi_i^{-1}]^2
    \Var(\hat p_i) + \Var(\hat p_i) \Var(\xi_i^{-1})\\
    &\leq \Var(\xi_i^{-1})  + \E[\xi_i^{-2}] \Var(\hat{p}_i)\\
    &\leq \frac{2}{\min_i k_i^2}\max_i q_i + \paren{\frac{1}{K^2} +
      \frac{2}{\min_i k_i^2} \max_i q_i}\Var(\hat{p}_i)\\
    &\leq \frac{2}{\min_i k_i^2}\max_i q_i + \paren{\frac{1}{K^2} +
      \frac{2}{\min_i k_i^2} \max_i q_i}\max_i\Var(\hat{p}_i).
  \end{split}
\end{equation}
We note that these estimates require full independence of $N_j(t)$ for
$j=1,\ldots,N$, not just independence of the $\chi_j(t)$.  Now,
\begin{equation}
  \label{e:var3}
  \Var({\hat R}(t))\leq \frac{2K^2}{\min_i k_i^2}{\bar R}(t)^2 \max_i q_i(t)
  + \paren{1 + 2N^2 \max_i q_i(t)}\max_i\Var(\hat{p}_i(t)).
\end{equation}

\subsubsection{MSE}
We can therefore express the mean square error of estimator ${\hat R}$
as
\begin{equation}
  \label{e:mse3_est}
  \begin{split}
    \MSE({\hat R}(t),{\bar R}(t))&\leq \frac{4K^2}{\min_i
      k_i^2}{\bar R}(t)^2\paren{1 + \max_i q_i(t)}\max_i q_i(t) \\
    &\quad+ \paren{1 + N^2 + 2N^2 \max_i
      q_i(t)}\max_i\MSE(\hat{p}_i(t),p_i(t)).
  \end{split}
\end{equation}

\begin{figure}
  \label{f:estimator}
\pgfplotsset{every axis/.append style={thick,line join=round}}
\begin{tikzpicture}
\begin{groupplot}[
	group style = {group size=3 by 1, horizontal sep=5pt},
	width=5cm,
	height=5cm,
	ymode=log,
	ymin=0.000006,
	grid=major,
	cycle list name=linestyles,
	unbounded coords=jump,
]
	\nextgroupplot[title={$n=-\tfrac{1}{2}$}, ylabel={1-Estimator}, xmax=600]
	\addplot table[x=time,y=exact]   {estimator_figure/quasiharmonic1.dat};
	\addplot table[x=time,y=chill_1] {estimator_figure/quasiharmonic1.dat};
	\addplot table[x=time,y=chill_2] {estimator_figure/quasiharmonic1.dat};

	\nextgroupplot[yticklabels={,,},title={$n=0$},xlabel={$t$}, xmax=500]
	\addplot table[x=time,y=exact]   {estimator_figure/harmonic.dat};
	\addplot table[x=time,y=chill_1] {estimator_figure/harmonic.dat};
	\addplot table[x=time,y=chill_2] {estimator_figure/harmonic.dat};
	
	\nextgroupplot[yticklabels={,,},title={$n=\tfrac{1}{2}$}, xmax=380]
	\addplot table[x=time,y=exact]   {estimator_figure/quasiharmonic2.dat};
	\addplot table[x=time,y=chill_1] {estimator_figure/quasiharmonic2.dat};
	\addplot table[x=time,y=chill_2] {estimator_figure/quasiharmonic2.dat};
	\legend{$1-R(t)$,$1-\tilde{R}(t)$,$1-\hat{R}(t)$}

\end{groupplot}
\end{tikzpicture}

  \caption{Comparison of the Chill type estimators $\tilde{R}(t)$ and
    $\hat{R}(t)$ to the true {expected proportion of the low
      temperature rate found, ${R}(t)$, on a test system that can
      deviate from the Erying-Kramers law.}}
\end{figure}
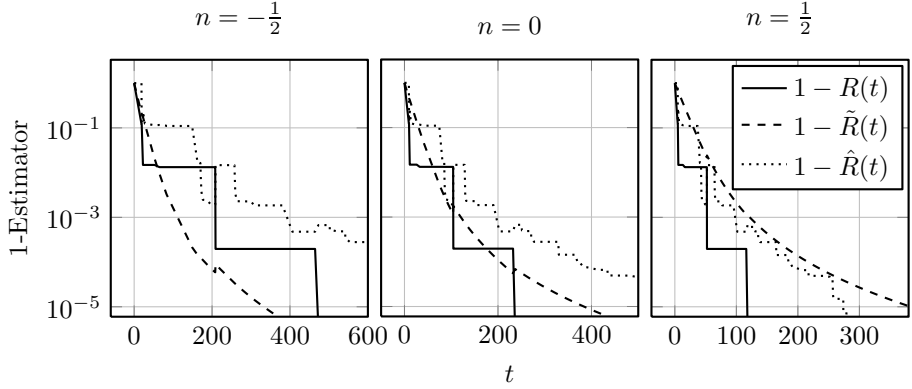

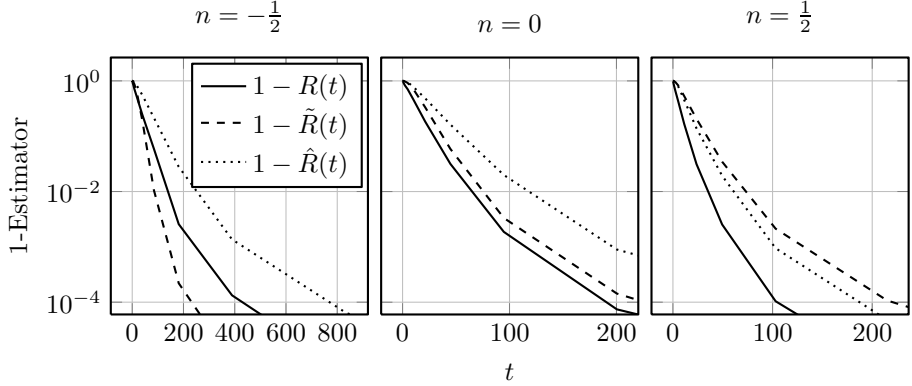
\begin{figure}
  \label{f:avg_estimator}
\pgfplotsset{every axis/.append style={thick,line join=round}}
\pgfplotscreateplotcyclelist{custom}{%
every mark/.append style={fill=gray}\\%
dashed,every mark/.append style={solid,fill=gray}\\%
dotted,every mark/.append style={solid,fill=gray}\\%
every mark/.append style={fill=gray},mark=square*\\%
every mark/.append style={fill=gray},mark=otimes*\\%
mark=star\\%
every mark/.append style={fill=gray},mark=diamond*\\%
densely dashed,every mark/.append style={solid,fill=gray},mark=*\\%
densely dashed,every mark/.append style={solid,fill=gray},mark=square*\\%
densely dashed,every mark/.append style={solid,fill=gray},mark=otimes*\\%
densely dashed,every mark/.append style={solid},mark=star\\%
densely dashed,every mark/.append style={solid,fill=gray},mark=diamond*\\%
}
\begin{tikzpicture}
\begin{groupplot}[
	group style = {group size=3 by 1, horizontal sep=5pt},
	width=5cm,
	height=5cm,
	ymode=log,
	ymin=0.00006,
	grid=major,
	cycle list name=custom,
	unbounded coords=jump,
]
	\nextgroupplot[title={$n=-\tfrac{1}{2}$}, ylabel={1-Estimator}]
  \addplot+ table[x=time,y=exact]   {avg_estimator_figure/quasiharmonic1.dat};
  \addplot+ table[x=time,y=chill_1] {avg_estimator_figure/quasiharmonic1.dat};
  \addplot+ table[x=time,y=chill_2] {avg_estimator_figure/quasiharmonic1.dat};
	\legend{$1-R(t)$,$1-\tilde{R}(t)$,$1-\hat{R}(t)$}

	\nextgroupplot[yticklabels={,,},title={$n=0$},xlabel={$t$}]
  \addplot+ table[x=time,y=exact]   {avg_estimator_figure/harmonic.dat};
  \addplot+ table[x=time,y=chill_1] {avg_estimator_figure/harmonic.dat};
  \addplot+ table[x=time,y=chill_2] {avg_estimator_figure/harmonic.dat};
	
	\nextgroupplot[yticklabels={,,},title={$n=\tfrac{1}{2}$}]
  \addplot+ table[x=time,y=exact]   {avg_estimator_figure/quasiharmonic2.dat};
  \addplot+ table[x=time,y=chill_1] {avg_estimator_figure/quasiharmonic2.dat};
  \addplot+ table[x=time,y=chill_2] {avg_estimator_figure/quasiharmonic2.dat};

\end{groupplot}
\end{tikzpicture}

  \caption{Comparison of the expected value of the Chill type estimators $\tilde{R}(t)$ and
    $\hat{R}(t)$ to the true {expected proportion of the low
      temperature rate found, ${R}(t)$, on a test system that can
      deviate from the Erying-Kramers law.}}
\end{figure}

\section{Discussion}\label{s:discussion}
We have considered three Chill type estimators and shown them to be
consistent.  Their biases are small, relative to their variances, and
thus we have good estimators of ${R}(t)$, the true fraction of the
observed rate in the system. They represent a significant
improvement over the original AKMC stopping criterion presented in
\cite{Xu:2008jk}.  Indeed, these prior
approaches attempted to estimate the fraction of the saddles observed
when, in fact, it is the fraction of the observed rate that is of
fundamental importance.

{As an example, we will compare the accuracy of both estimators using
  a test system that consists of saddle points $s_j$ corresponding to
  potential energy barriers $V(s_j)-V(m)= 1 + \frac{4}{19}j$, for
  $j=0,\ldots,19$. The test system has rates that obey a modified
  Arrhenius equation with the form:
  \begin{equation}
    {\tilde k}_j^{\rm hi} = \left(\frac{\beta^{\rm lo}}{\beta^{\rm hi}}\right)^{n}  g_j \exp[\beta V(s_j)-V(m)].
  \end{equation}
  Compare to equation~\eqref{e:kramers} (recall the subscript $i$ has
  been suppressed).  The variable $n$ controls how the rates deviate
  from an unmodified Arrhenius rate law. When $n=0$ the modified rates
  ${\tilde k}_j^{\rm hi}$ are equal to the unmodified rates
  $k_j^{\rm hi}$, while when $\beta^{\rm hi} < \beta^{\rm lo}$, the
  modified rates are larger (resp. smaller) than the unmodified rates
  if $n>0$ (resp. $n < 0$).}

{We use Algorithm~\ref{a:sps} on the test system with modified rates
  ${\tilde k}_j^{\rm hi}$. This means
  $(N_j(t))_{0 \le t \le \tsim}^{j=1,\ldots,N}$ are independent
  Poisson processes with parameters ${\tilde k}_j^{\rm hi}$.  To
  compute $R(t)$, we use~\eqref{e:R} and sample $\chi_j(t)$
  via~\eqref{e:chi}.  To compute ${\tilde R}(t)$ we use the unmodified
  Arrenius rates $k_j^{\rm hi}$ in equation~\eqref{e:approxs}.  For
  each of $R(t)$, ${\tilde R}(t)$ and ${\hat R}(t)$, the low
  temperature rates $k_j = k_j^{\rm lo}$ used in equations~\eqref{e:R}
  and~\eqref{e:Rt} are the same.  We take $g_j = 1$ for all $j$ and
  $\beta^{\rm hi}=2.5$, $\beta^{\rm lo}=10.0$. The variable $n$ was
  varied to compare the cases where the Erying-Kramers rates
  ${k}_j^{\rm hi}$ underestimate ($n=\frac{1}{2}$), overestimate
  ($n=-\frac{1}{2})$, and provide an exact estimate ($n=0$) of the
  modified rates ${\tilde k}_j^{\rm hi}$.  Results are shown in
  Figures~1 and~2.  The test system shows that $\tilde{R}(t)$
  can overestimate ${R}(t)$ if the Eyring-Kramers rate deviates from
  the true rate at $\beta^{\rm hi}$, while $\hat{R}(t)$ tends to
  provide a conservative estimate of $R(t)$.}

\bibliographystyle{plain}

\end{document}